\documentclass[11pt]{article}

\usepackage[margin=0.9in]{geometry}
\usepackage{amsmath,amssymb,amsthm,mathtools}
\usepackage{booktabs}
\usepackage{microtype}
\usepackage[hidelinks]{hyperref}
\usepackage{xurl}
\newtheorem{theorem}{Theorem}
\newtheorem{lemma}[theorem]{Lemma}

\theoremstyle{remark}
\newtheorem{remark}[theorem]{Remark}

\newcommand{\R}{\mathbb R}
\newcommand{\Z}{\mathbb Z}

\newcommand{\OPT}{\operatorname{OPT}}
\newcommand{\ALG}{\operatorname{ALG}}
\newcommand{\CPB}{C_{\mathrm{PB}}}
\newcommand{\CRdet}{\operatorname{CR}_{\mathrm{det}}}

\hypersetup{
  pdftitle={A Deterministic Constant-Competitive Algorithm for Dynamic Mixture-of-Experts Serving},
  pdfauthor={Ian D'Ambrosio}
}

\title{A Deterministic Constant-Competitive Algorithm for\\
Dynamic Mixture-of-Experts Serving}
\author{Ian D'Ambrosio\\
\small Nth Research Collective}
\date{26 August 2026}

\begin{document}
\maketitle

\begin{abstract}
Dynamic Mixture-of-Experts Serving allocates $k$ replica GPUs among $m$
experts as workloads change.  At each round, the online algorithm sees the
current workload, chooses integral replica counts, and pays bottleneck service
cost plus replica movement.  It does not know future workloads.  Huang, Lou,
and Xiao gave an \mbox{$O(\sqrt{\log k})$}-competitive randomized algorithm for
this problem.  We prove a deterministic $O(1)$-competitive algorithm.  For
every number of experts and every $k\geq1$, the algorithm satisfies
\[
 \ALG_{\rm det}\leq10\CPB\cdot\OPT+(5\CPB+8)k+16,
\]
where $\CPB$ is the absolute constant from Chasing Positive Bodies at resource
augmentation one and covering sparsity two.  Consequently, the deterministic
competitive-ratio upper bound is
\[
 \boxed{\CRdet(k)\leq10\CPB\quad\text{for every }k\geq1},
\]
which is $O(1)$ because $\CPB$ is absolute.  The routine universal lower bound
then gives $\CRdet(k)=\Theta(1)$.  The multiplicative factor does not depend on
the number of experts, the replica budget, the horizon, or the workload
values.  Thus randomization is not needed for the asymptotic guarantee.

The proof has two layers.  A finite tangent envelope, summable positive resets,
and a nonexpansive balanced projection reduce reciprocal-max service costs to a
deterministic exact-budget fractional path.  A new deterministic rounding
theorem converts every such path to integral allocations with service
distortion three and movement bounded by the fractional movement plus $6k$.
The complete reduction, rounding theorem, causal composition, and quantified
main theorem are machine-checked in Lean 4 relative to the positive-body result
as the sole scientific source premise.  The theorem concerns the allocation
model above.  It does not include network topology, shared-edge congestion, or
routing decisions.
\end{abstract}

\section{Introduction}

Dynamic Mixture-of-Experts (MoE) inference must decide how many replica GPUs to
assign to each expert as workloads change.  Moving replicas has a switching
cost, while insufficient replication increases bottleneck latency.  Huang,
Lou, and Xiao formalized this tradeoff as an online problem and proved an
\mbox{$O(\sqrt{\log k})$} randomized competitive ratio for the integral primal
problem~\cite{HLX2026}.  Their \mbox{$\Omega(\sqrt{\log k})$} construction concerns an
offset-Fenchel-dual maximization problem.  Weak duality points in the wrong
direction for transferring it to a primal lower bound, and the asymptotic
primal ratio remained open.

Two contemporaneous follow-ups then gave constant randomized bounds for the
primal problem.  Vergeres proved a deterministic factor-$16$ bound for the
equality-budget fractional relaxation and, using Lazy Threshold Rounding, a
randomized factor-$48$ integral bound against an oblivious
adversary~\cite{Vergeres2026}.  An earlier version of this manuscript obtained a
constant randomized integral ratio through the positive-body reduction below
and the same rounding theorem~\cite{HLX2026}.  Neither result gave a
deterministic integral bound.  A simple deterministic baseline keeps the
supplied integral allocation fixed.  Since every fractional coordinate lies in
$[0,k]$, its service is within a factor $k+1$ of that fractional path.  Thus the
previous deterministic bracket was $\Omega(1)$ to $O(k)$.

This paper closes that deterministic gap: for every $k\geq1$,
$\CRdet(k)\leq10\CPB$, an $O(1)$ upper bound because $\CPB$ is absolute.  The
multiplicative factor is independent of the number of experts, replica budget,
initial state, horizon, and workload values.

The proof preserves the positive-body reduction from the previous version.
The epigraph of $r/(1+u)$ admits a finite positive-polyhedral inner
approximation with constant distortion.  This permits a reduction to the
positive-body chasing theorem of Bhattacharya, Buchbinder, Levin, and
Saranurak~\cite{BBLS2023}.  A finite tangent grid handles the continuum of
reciprocal tangents, geometrically shrinking positive resets charge service to
movement, and a balanced projection removes resource augmentation without
expanding movement.

The new component is deterministic pathwise rounding.  A coordinate receives
an integral unit only after its fractional mass crosses one of the activation
thresholds at $2,5,8,\ldots$.  Once present, the unit remains until it has at
least one unit of certified depth.  The top-unit potential
\[
 \Phi(y,z)=\sum_i(3y_i-1-z_i)_+
\]
decreases by at least one at every repair transfer.  Exact fractional budgets
bound its increase between rounds by half the fractional $\ell_1$ movement.
The resulting telescope has service distortion three and movement stretch one
plus an additive $6k$, independent of the expert count and horizon.

\paragraph{Contribution.}
Our contribution is the deterministic $O(1)$ upper bound and the pathwise
rounding theorem that makes it possible.  This improves the prior deterministic
$O(k)$ baseline to $O(1)$ and shows that randomization is not asymptotically
necessary.  The factor-one lower bound is only the universal observation that
every online path is a feasible offline path; it is not a scientific
contribution.

\paragraph{Verification.}
The complete theorem is formalized in Lean 4 relative to the two-sparse
positive-body chaser as its sole scientific source premise.  The companion
archive rebuilds the proof and regenerates the exact controls from one command,
including a deliberately false rounder without hysteresis.

\section{Model and main result}

Fix $m\geq1$ experts and $k\geq1$ replica GPUs.  The integral state space is
\[
 X_{m,k}=\left\{x\in\Z_{\geq0}^{m}:\sum_{i=1}^{m}x_i=k\right\}.
\]
At round $t$, a nonnegative workload vector $r_t\in\R_{\geq0}^{m}$ is
revealed.  The online algorithm chooses $x_t\in X_{m,k}$ and pays
\[
 f_t(x_t)+\lVert x_t-x_{t-1}\rVert_1,
 \qquad
 f_t(x)=\max_{i\in[m]}\frac{r_{t,i}}{1+x_i},
\]
where the integral initial state $x_0$ is part of the instance.  For a fixed
finite workload sequence, $\OPT$ is the minimum total cost over all integral
paths.

For fixed $k$, write $\CRdet(k)$ for the infimum of factors $\Gamma$ with the
following property: for every $m\geq1$ there is a constant $A_{m,k}\geq0$ such
that, for every integral initial state, some deterministic causal integral
algorithm satisfies
$\ALG\leq\Gamma\cdot\OPT+A_{m,k}$ on every finite workload sequence.  The
additive term may depend on $m$ and $k$, but not on the horizon or workloads.
Causality means that equal request prefixes produce equal allocation prefixes.

Let $\CPB$ denote an absolute competitive constant supplied by Theorem~1.1 of
\cite{BBLS2023} at resource augmentation $\epsilon=1$ and maximum covering-row
support $d=2$.  We absorb into $\CPB$ the constant from reducing a positive body
to positive halfspaces and the standard conversion between upward movement and
full $\ell_1$ movement.

\paragraph{Two properties of the service cost.}
The proof uses two elementary features of $f_t$.  First, its reciprocal terms
are smooth under capacity scaling: for every $r,u\geq0$,
\[
 \frac{r}{1+u/2}\leq2\frac{r}{1+u}.
\]
Thus halving capacity from $2k$ to $k$ while retaining at least half of every
coordinate increases service by at most a factor two.  Second, $f_t$ is the
maximum of univariate coordinate functions.  Its epigraph therefore separates
into constraints involving only the height and one allocation coordinate, and
the tangent discretization preserves covering-row support $d=2$.
Equation~\eqref{eq:cover-row} makes the sparsity explicit, while
\eqref{eq:service-height} applies the capacity-scaling inequality.

\begin{theorem}[Deterministic $O(1)$-competitive upper bound]\label{thm:upper}
For every $m,k\geq1$ and every integral initial state, there is a causal
deterministic integral online algorithm such that every finite nonnegative
workload sequence satisfies
\[
 \ALG_{\rm det}
 \leq 10\CPB\cdot\OPT+(5\CPB+8)k+16.
\]
The multiplicative constant is independent of $m$, $k$, the initial state, the
horizon, and the workloads.
\end{theorem}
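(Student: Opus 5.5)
The proof composes two layers. First, a fractional reduction to Chasing Positive Bodies produces a deterministic exact-budget fractional path whose cost is $O(\CPB)\cdot\OPT$ plus an additive $O(k)$ term. Second, a deterministic pathwise rounding converts that path to integral states. The two layers compose causally, since each round's output depends only on the current request and the previous internal state.

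\textbf{Fractional layer.} I would relax to $y\in\R_{\ge0}^m$ with $\sum_i y_i=k$ and service $\max_i r_{t,i}/(1+y_i)$. The epigraph $\{(h,y):h\ge r_{t,i}/(1+y_i)\ \forall i\}$ is positive upward-closed. Each coordinate constraint is convex in $(h,y_i)$, so I would inner-approximate it by finitely many tangent lines at a geometric grid of points $y_i\in\{0,1,2,4,\dots,2k\}$. This gives covering rows $a h+b y_i\ge 1$ of support $d=2$, as in \eqref{eq:cover-row}, with distortion a constant factor. Tangents beyond $2k$ are unnecessary because the capacity is at most $2k$. Adding a height coordinate $h$ whose movement is charged as service is the standard service-via-movement trick. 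Geometrically shrinking positive resets of $h$ to zero between rounds convert accumulated height into summable movement, which is charged to service.

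I would then invoke Theorem~1.1 of \cite{BBLS2023} with $\epsilon=1$ and $d=2$. Against an offline optimum of budget $k$, it yields a chaser at budget $2k$ with cost at most $\CPB$ times the offline cost. The offline cost is at most a constant times $\OPT$ plus $O(k)$, from the initial state.

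To return to budget exactly $k$, I would apply a balanced projection $y\mapsto y'$ with $y'_i\ge y_i/2$ and $\sum_i y'_i=k$. It should be nonexpansive in $\ell_1$, for example halving followed by a water-filling correction, which is $1$-Lipschitz. The capacity-scaling inequality then bounds service by a factor $2$, as in \eqref{eq:service-height}. The target of this layer is fractional cost $\le c_1\CPB\,\OPT+O(\CPB k)$, with $c_1$ small enough that the final constant is $10$.

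\textbf{Rounding layer.} This is where I expect the main obstacle. Given the exact-budget fractional path $y_t$, I would maintain $z_t\in X_{m,k}$ with hysteresis. Coordinate $i$ gains its $j$-th unit only when $y_i$ crosses the activation threshold $3j-1$, that is, at $2,5,8,\dots$. An existing unit is removed only when its certified depth drops below one. Because of this, $z_i\ge$ roughly $(y_i-2)/3$, which gives the service bound $1+z_i\ge(1+y_i)/3$ and hence service distortion three.

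For movement, I would use the potential $\Phi(y,z)=\sum_i(3y_i-1-z_i)_+$. The exact-budget constraint forces every addition to be paired with a removal, and each such repair transfer is shown to decrease $\Phi$ by at least one. Between rounds, $\Phi$ increases by at most $\tfrac12\lVert y_t-y_{t-1}\rVert_1$, since budget conservation makes positive and negative changes cancel. I expect the delicate point to be proving that a removable unit always exists when an addition is needed. That requires a counting argument using $\sum_i z_i=\sum_i y_i=k$ and the threshold spacing.

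Telescoping, with $\Phi\le 3k$ initially and $\Phi\ge0$, gives integral movement at most the fractional movement plus $6k$.

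\textbf{Combining.} Integral cost is at most $3\cdot$ fractional service plus fractional movement plus $6k$. Substituting the fractional bound and tracking constants (the initial transfer from $x_0$ contributes further $O(k)$ and the small constant $16$) yields $\ALG_{\rm det}\le10\CPB\,\OPT+(5\CPB+8)k+16$. None of the constants depend on $m$, the horizon, or the workloads.
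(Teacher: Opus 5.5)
Your architecture is the paper's: a tangent envelope with two-sparse covering rows, geometrically shrinking resets, the positive-body chaser at $\epsilon=1$, a nonexpansive balanced projection, and then hysteresis rounding telescoped by a top-unit potential. The rounding step as you wrote it does not go through, however.

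In your notation $y$ is fractional and $z$ is integral, so $\Phi(y,z)=\sum_i(3y_i-1-z_i)_+$ puts the factor $3$ on the fractional mass. Then neither property you claim holds.
A fractional move can raise $\Phi$ by up to $3\sum_i(y_{t,i}-y_{t-1,i})_+=\tfrac32\lVert y_t-y_{t-1}\rVert_1$. A transfer need not lower $\Phi$ at all: the receiver's (positive) term drops by one, while a donor with nonnegative term $3y_j-1-z_j$ gains one.
The potential that works measures the surplus of each coordinate's top integral unit, $\Phi=\sum_i(3z_i-1-y_i)_+$.

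Your removal condition also reads backwards. A unit may be taken only from a coordinate of the temporary mass-$(k+1)$ state $w$ whose depth $3w_j-1-y_j$ is \emph{at least} one; the paper takes one of maximal depth. Nonnegative depth keeps the donor at or above its required level $\max\{0,\lceil(y_j-2)/3\rceil\}$. Depth at least one makes its term fall from $d$ to $(d-3)_+$, a drop of at least one.

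With this orientation the needed facts are short.
A donor of depth at least one exists. If every positive coordinate $j\in S$ of $w$ had $3w_j-1-y_j<1$, then $\sum_{j\in S}y_j>3(k+1)-2|S|\geq k+1$ because $|S|\leq k+1$, contradicting $\sum_jy_j=k$.
A deficient receiver has $y_i>3z_i+2$. Its depth is therefore negative before and after the addition, and its term stays zero. Thus each transfer costs at most $2$ in $\ell_1$ and lowers $\Phi$ by at least one.
Because the fractional coefficient is now one, $(3z_i-1-y'_i)_+-(3z_i-1-y_i)_+\leq(y_i-y'_i)_+$. Equal budgets sum this to $\tfrac12\lVert y'-y\rVert_1$, and with $\Phi\leq3k$ initially this yields the $6k$ additive.

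Separately, your final ``tracking constants'' step is only asserted. Since rounding triples service but not movement, you must bound the two separately by the chaser movement $M$:
service is at most $\tfrac83$ times the event height ($\tfrac43$ from the envelope, $2$ from halving);
that height must be climbed from at most $2\delta_{t-1}$ and descended to at most $2\delta_t$, with $\delta_t=2^{-t}$, so total service is at most $\tfrac43M+\tfrac{16}{3}$;
fractional movement is at most $M+2k$;
and $M\leq\CPB(k+2\OPT)$.
Then $3(\tfrac43M+\tfrac{16}{3})+(M+2k)+6k=5M+8k+16$ gives exactly the stated bound.
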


\paragraph{Interpretation and scope.}
The algorithm sees each workload before choosing that round's allocation, but
it never sees future workloads.  Its total service and movement cost is at most
a fixed multiple of the best offline integral path, plus the displayed
additive term.  The fixed multiple is independent of every problem dimension.
This is a competitive-cost guarantee, not a running-time or serving-latency
benchmark.  The model allocates replicas but does not model physical network
topology, shared links, or routing.

Since $\CPB$ is absolute, Theorem~\ref{thm:upper} gives the promised upper bound
\[
 \boxed{\CRdet(k)\leq10\CPB\quad\text{for every }k\geq1}.
\]
In particular, $\CRdet(k)=O(1)$.  Every realized online path is also feasible
offline.  Scaling one positive one-round workload makes any
horizon-independent additive negligible, so $\CRdet(k)\geq1$.  Consequently
$\CRdet(k)=\Theta(1)$.  The lower inequality is routine and is not a separate
contribution.  Here $O(1)$ refers to the competitive factor multiplying
$\OPT$, not to the wall-clock running time of the algorithm.

\section{Source theorem}

We use the following exact consequence of a published result.

\paragraph{Positive-body chasing.}
A positive body has the form
\[
 K_t=\{z\in\R_{\geq0}^{D}:C^tz\geq\mathbf1,\;P^tz\leq\mathbf1\},
\]
where both matrices are nonnegative.  Theorem~1.1 of~\cite{BBLS2023} gives,
for $\epsilon\in(0,1]$, an
$O(\epsilon^{-1}\log(d/\epsilon))$-competitive online chaser whose covering
constraints remain exact and whose packing constraints may be violated by a
factor $1+\epsilon$; $d$ is the largest covering-row support.  Its movement is
compared with an exact offline path beginning at zero.  Thus, when
$\epsilon=1$ and $d=2$, the movement factor $\CPB$ is absolute, covering rows
remain exact, and packing right-hand sides may be doubled.  Strictly positive
right-hand sides normalize to one without changing row support.  A covering
row with zero coefficients and zero right-hand side is vacuous and may be
deleted.

The previous version also invoked Lazy Threshold Rounding from~\cite{HLX2026}.
The deterministic rounding theorem in Section~\ref{sec:det-rounding} replaces
that step.  No randomized rounding theorem is a premise of
Theorem~\ref{thm:upper}.

\section{A finite positive approximation to reciprocal service}

Fix a workload coordinate $r\geq0$ and augmented allocation $u\geq0$.  Write
$q=1+u$.  For a scale $p\geq1$, define the tangent
\[
 L_{r,p}(u)=\frac{r(2p-q)}{p^2}
           =\frac{r(2p-1-u)}{p^2}.
\]

\begin{lemma}[Tangent envelope]\label{lem:tangent}
For all $r\geq0$, $q>0$, and $p>0$,
\[
 \frac rq-L_{r,p}(u)=\frac{r(q-p)^2}{qp^2}\geq0.
\]
If $0\leq u\leq2k$, some $p\in\{1,\ldots,2k+1\}$ also satisfies
\[
 L_{r,p}(u)\geq\frac34\frac r{1+u}.
\]
\end{lemma}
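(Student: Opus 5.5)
The identity is a direct algebraic computation. With $q=1+u$,
\[
 \frac rq-\frac{r(2p-q)}{p^2}=\frac{r\bigl(p^2-2pq+q^2\bigr)}{qp^2}=\frac{r(q-p)^2}{qp^2},
\]
obtained by putting both terms over the common denominator $qp^2$. Since $r\geq0$ and $q,p>0$, the right-hand side is nonnegative. This shows that every tangent lies below $r/q$, so $L_{r,p}$ is an inner (under-)approximation of the reciprocal at every scale.

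For the second claim, the identity gives
\[
 L_{r,p}(u)=\frac rq\Bigl(1-\frac{(q-p)^2}{p^2}\Bigr)=\frac rq\Bigl(1-\bigl(\tfrac qp-1\bigr)^2\Bigr).
\]
If $r=0$, every $p$ works. Otherwise it suffices to find $p\in\{1,\dots,2k+1\}$ with $|q/p-1|\leq\tfrac12$, i.e.\ $\tfrac23 q\leq p\leq 2q$. Because $0\leq u\leq2k$, we have $q\in[1,2k+1]$. The natural choice is $p=\lfloor q\rfloor$, which lies in $\{1,\dots,2k+1\}$ since $1\leq q\leq 2k+1$. It satisfies $p\leq q<p+1$, so $q/p<1+1/p\leq 2$. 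This gives only $q/p-1<1$, which is too weak when $p=1$.

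The main obstacle is therefore the case $p=1$, and I would handle it by choosing between floor and ceiling. Take $p$ to be whichever of $\lfloor q\rfloor$ and $\lceil q\rceil$ makes $|q/p-1|$ smaller; both lie in $\{1,\dots,2k+1\}$ because $q\in[1,2k+1]$.
\begin{itemize}
\item If $q\leq\tfrac32$, take $p=1$, so $q/p-1\in[0,\tfrac12]$.
\item If $q>\tfrac32$, take $p=\lceil q\rceil\geq2$. Then $q/p\geq(p-1)/p\geq\tfrac12$, so $1-q/p\leq\tfrac12$.
\end{itemize}
(Alternatively, for $q\geq2$ one can use $\lfloor q\rfloor\geq2$, since $q/p<1+1/p\leq\tfrac32$.) In every case $(q/p-1)^2\leq\tfrac14$, hence $L_{r,p}(u)\geq\tfrac34\, r/(1+u)$, which is the claimed bound.
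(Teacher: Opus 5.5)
Your proof is correct and follows the paper's argument: you use the same common-denominator identity, and the same reduction of $L_{r,p}(u)/(r/q)=a(2-a)=1-(1-a)^2$, with $a=q/p$, to finding a grid point $p$ with $q/p\in[\tfrac12,\tfrac32]$. The paper takes $p=\lceil q\rceil$ in every case, since $q\ge1$ gives $q\le\lceil q\rceil<q+1\le2q$, so your separate treatment of $q\le\tfrac32$ with $p=1$ is valid but unnecessary.
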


\begin{proof}
The first identity follows by putting the two terms over $qp^2$.  For the
second claim, take $p=\lceil q\rceil$.  Then $1\leq p\leq2k+1$ and
$q\leq p<q+1\leq2q$.  Hence $a=q/p\in(1/2,1]$, and
\[
 \frac{L_{r,p}(u)}{r/q}=a(2-a)\geq\frac34.
\]
The case $r=0$ is immediate.
\end{proof}

Define the finite envelope
\[
 H_r(u)=\max_{1\leq p\leq2k+1}L_{r,p}(u).
\]
Lemma~\ref{lem:tangent} gives
\begin{equation}\label{eq:envelope}
 \frac34\frac r{1+u}\leq H_r(u)\leq\frac r{1+u}
 \qquad(0\leq u\leq2k).
\end{equation}
For $r>0$, the inequality $s\geq L_{r,p}(u)$ is equivalent to the positive
covering row
\begin{equation}\label{eq:cover-row}
 \frac{p^2}{r(2p-1)}s+\frac1{2p-1}u\geq1.
\end{equation}
Its support has size at most two.  This is the $d=2$ sparsity supplied by the
maximum-of-univariate form of $f_t$.  Coordinates with $r=0$ require no row;
equivalently, one may insert and delete a vacuous zero-coefficient covering row
with zero right-hand side, as in the formal reduction.

\begin{remark}
A geometric grid $p\in\{(3/2)^j:(3/2)^j\leq1+2k\}$ gives the same $3/4$
factor with $O(\log(k+1))$ rows per expert.  The concrete Lean reduction uses
the elementary integer grid above; both envelope lemmas are machine-checked.
\end{remark}

\section{The event/reset reduction}

Use augmented coordinates $(u_1,\ldots,u_m,s)\in\R_{\geq0}^{m+1}$.  When
request $r_t$ arrives, present the positive body
\begin{equation}\label{eq:event}
 E_t=\left\{(u,s):\sum_i u_i\leq k,\quad
 s\geq L_{r_{t,i},p}(u_i)\;\text{for all }i,p\right\}.
\end{equation}
After taking the round-$t$ MoE action, present the reset body
\begin{equation}\label{eq:reset}
 R_t=\left\{(u,s):\sum_i u_i\leq k,\quad s\leq\delta_t\right\},
 \qquad \delta_t=2^{-t}\quad(t\geq1),
\end{equation}
and set $\delta_0:=0$ for the initial accounting convention.
Both packing rows have positive right-hand side after normalization.  The event
covering rows have support at most two.  Both bodies are nonempty.  The reset is
processed after the action at time $t$ and before the next request, so the
construction remains causal.

Run the positive-body chaser with $\epsilon=1$.  At an event, write its point
as $(u_t,s_t)$.  At the following reset, write its height as $b_t$.  Then
\begin{equation}\label{eq:aug-feasible}
 \sum_i u_{t,i}\leq2k,\qquad s_t\geq H_{r_{t,i}}(u_{t,i})\quad\forall i,
 \qquad 0\leq b_t\leq2\delta_t.
\end{equation}

\subsection{Removing resource augmentation}

Map every $u\geq0$ with $\sum_i u_i\leq2k$ to the exact-budget fractional
allocation
\begin{equation}\label{eq:balanced}
 \rho(u)=\frac{k-\frac12\sum_i u_i}{m},
 \qquad x_i(u)=\frac{u_i}{2}+\rho(u).
\end{equation}

\begin{lemma}[Balanced projection]\label{lem:balanced}
The map in \eqref{eq:balanced} satisfies $x(u)\geq0$,
$\sum_i x_i(u)=k$, $x_i(u)\geq u_i/2$, and
\[
 \lVert x(u)-x(v)\rVert_1\leq\lVert u-v\rVert_1.
\]
\end{lemma}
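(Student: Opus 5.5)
The plan is to verify the four claims directly from the formula \eqref{eq:balanced}, using only $u\geq0$ and $\sum_i u_i\leq 2k$, and treating the movement bound separately as the one place needing care.

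\emph{Nonnegativity, budget and domination.} Since $\sum_i u_i\leq2k$, we have $\rho(u)=(k-\tfrac12\sum_i u_i)/m\geq0$. Hence $x_i(u)=u_i/2+\rho(u)\geq u_i/2\geq0$, which gives both nonnegativity and the domination $x_i(u)\geq u_i/2$. For the budget, summing over $i$ gives
\[
\sum_i x_i(u)=\tfrac12\sum_i u_i+m\rho(u)=\tfrac12\sum_i u_i+k-\tfrac12\sum_i u_i=k .
\]

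\emph{Nonexpansiveness.} Write $\Delta=u-v$ and $\sigma=\sum_i\Delta_i$. Then for each $i$,
\[
x_i(u)-x_i(v)=\tfrac12\Delta_i-\tfrac{\sigma}{2m},
\]
because $\rho(u)-\rho(v)=-\sigma/(2m)$. Applying the triangle inequality coordinatewise and summing over $i$ yields
\[
\lVert x(u)-x(v)\rVert_1\leq\tfrac12\lVert\Delta\rVert_1+\tfrac12|\sigma|.
\]
Finally $|\sigma|\leq\lVert\Delta\rVert_1$, so the right-hand side is at most $\lVert u-v\rVert_1$.

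\emph{Main obstacle.} The only nontrivial point is the movement bound: the uniform shift $\rho$ spreads a change in total mass over all $m$ coordinates. One might worry that this inflates $\ell_1$ movement by a factor of order $m$. It does not, because each coordinate receives only a $1/m$ share of $\sigma$, so the shift contributes exactly $|\sigma|/2$ in total. This matches the factor $\tfrac12$ lost from halving $u$, giving a total stretch of at most one. No earlier result is needed beyond the constraint $\sum_i u_i\leq2k$ from \eqref{eq:aug-feasible}, which is what guarantees $\rho\geq0$.
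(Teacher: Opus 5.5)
Your proof is correct and follows the paper's argument essentially verbatim. In both, $\rho(u)\geq0$ gives nonnegativity and domination, summing gives mass $k$, and nonexpansiveness comes from splitting $x_i(u)-x_i(v)$ into $\tfrac12(u_i-v_i)$ plus a uniform shift. That shift contributes $m|\rho(u)-\rho(v)|=\tfrac12\bigl|\sum_i(u_i-v_i)\bigr|\leq\tfrac12\lVert u-v\rVert_1$.
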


\begin{proof}
The budget bound gives $\rho(u)\geq0$, and summing
\eqref{eq:balanced} gives total mass $k$.  Moreover,
\begin{align*}
 \lVert x(u)-x(v)\rVert_1
 &\leq\frac12\lVert u-v\rVert_1+m|\rho(u)-\rho(v)|\\
 &=\frac12\lVert u-v\rVert_1
   +\frac12\left|\sum_i(u_i-v_i)\right|\\
 &\leq\lVert u-v\rVert_1.
\end{align*}
\end{proof}

Combining \eqref{eq:envelope}, \eqref{eq:aug-feasible}, and the capacity-halving
bound $1+x_i(u_t)\geq(1+u_{t,i})/2$ gives
\begin{equation}\label{eq:service-height}
 f_t(x(u_t))
 \leq2\max_i\frac{r_{t,i}}{1+u_{t,i}}
 \leq\frac83s_t.
\end{equation}

\subsection{Charging service to vertical movement}

Let $a_{t-1}$ be the chaser height after the preceding reset, with $a_0=0$.
Thus $a_{t-1}\leq2\delta_{t-1}$ for every $t\geq1$, including the initial
case.  Let $V_t$ be the vertical movement while processing $E_t,R_t$.
Endpoint distances imply
\[
 V_t\geq|s_t-a_{t-1}|+|s_t-b_t|
     \geq2s_t-a_{t-1}-b_t.
\]
Using \eqref{eq:service-height}, $a_{t-1}\leq2\delta_{t-1}$, and
$b_t\leq2\delta_t$, we obtain
\[
 f_t(x(u_t))
 \leq\frac43V_t+\frac83(\delta_{t-1}+\delta_t).
\]
Because $\delta_0=0$ and $\sum_{t\geq1}\delta_t=1$, summing the error terms
contributes at most $16/3$.  If $M$ denotes the chaser's complete movement,
then
\begin{equation}\label{eq:service-bridge}
 \sum_t f_t(x(u_t))\leq\frac43M+\frac{16}{3}.
\end{equation}
The balanced projection is nonexpansive.  Write $x_t:=x(u_t)$ for $t\geq1$
and identify the supplied integral $x_0$ with its real vector.  The first state
can be at distance at most $2k$ from $x_0$, the diameter of the exact simplex.
Hence
\begin{equation}\label{eq:movement-bridge}
 \lVert x_1-x_0\rVert_1+
   \sum_{t=2}^{T}\lVert x_t-x_{t-1}\rVert_1\leq M+2k.
\end{equation}

\section{Comparison with the MoE optimum}

Take any integral comparator path $y_1,\ldots,y_T$ and let
$h_t=f_t(y_t)$.  At event time use the body point $(y_t,h_t)$, and at reset
time use $(y_t,0)$.  The tangent-underestimator identity makes the event point
feasible; the reset point is feasible as well.  Starting from zero, this lifted
body path has movement
\begin{align}
 &k+\sum_{t=2}^{T}\lVert y_t-y_{t-1}\rVert_1+2\sum_{t=1}^{T}h_t
 \notag\\
 &\hspace{35mm}\leq k+2\,\operatorname{cost}(y_1,\ldots,y_T).
 \label{eq:comparator}
\end{align}
Applying the positive-body theorem to an optimal MoE path gives
\begin{equation}\label{eq:M}
 M\leq\CPB(k+2\OPT).
\end{equation}

\section{Deterministic pathwise rounding}\label{sec:det-rounding}

This section supplies the new ingredient that replaces Lazy Threshold
Rounding: a deterministic, pathwise conversion from fractional to integral
allocations.  Its service factor is three, while its total movement is at most
the fractional movement plus $6k$.  For $a\in\R$, write
$a_+=\max\{0,a\}$.  Given
$z\in\R_{\geq0}^m$ with $\sum_i z_i=k$, define the required integral level
\begin{equation}\label{eq:required}
 \ell_i(z)=\max\left\{0,\left\lceil\frac{z_i-2}{3}\right\rceil\right\}.
\end{equation}
It is the smallest nonnegative integer satisfying
\begin{equation}\label{eq:det-service-threshold}
 1+z_i\leq3(1+\ell_i(z)).
\end{equation}
Since $\ell_i(z)\leq z_i$, we have $\sum_i\ell_i(z)\leq k$.

\paragraph{Repair rule.}
Given the previous integral state $y\in X_{m,k}$ and current fractional state
$z$, choose the least index $i$ with $y_i<\ell_i(z)$ and temporarily add one
unit there.  Call the resulting mass-$(k+1)$ state $w$.  Among its positive
coordinates choose, using a fixed tie-breaking rule, a coordinate $j$
maximizing
\begin{equation}\label{eq:top-depth}
 d_j(w,z)=3w_j-1-z_j.
\end{equation}
Set $y:=w-e_j$ and repeat until every coordinate meets its required level.
Starting from the supplied $y_0$, apply this rule to $z_t$ to obtain $y_t$.

\begin{lemma}[Deep donor]\label{lem:deep-donor}
Every nonnegative integral vector $y$ of mass $k+1$ contains a positive
coordinate $j$ with $3y_j-1-z_j\geq1$.  Removing one unit from such a
coordinate leaves it at or above $\ell_j(z)$.
\end{lemma}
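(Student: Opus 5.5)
The plan is to prove the first claim by averaging and the second by unwinding the ceiling in \eqref{eq:required}. The context is that $z\in\R_{\geq0}^m$ has $\sum_i z_i=k$, as throughout this section, and $y$ is nonnegative integral with $\sum_i y_i=k+1$.

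\emph{Existence of a deep donor.} Sum the depths $d_j=3y_j-1-z_j$ over the support $S=\{j:y_j>0\}$. Since coordinates outside $S$ contribute nothing to $\sum_j y_j$ and $z\geq0$,
\[
 \sum_{j\in S}(3y_j-1-z_j)\;\geq\;3(k+1)-|S|-k\;=\;2k+3-|S|.
\]
Because each $j\in S$ has $y_j\geq1$, we have $|S|\leq k+1$, so the sum is at least $k+2$. Also $|S|\geq1$. Averaging over $|S|\leq k+1$ terms gives some $j\in S$ with
\[
 d_j\geq\frac{k+2}{k+1}>1.
\]
This already proves $d_j\geq1$. A cruder route also works: if every positive coordinate had $3y_j-1-z_j<1$, then $3y_j<2+z_j$, and summing gives $3(k+1)<2|S|+k\leq 2(k+1)+k$, a contradiction. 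I will present this contradiction form, since it is cleaner.

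\emph{Removal stays at or above the required level.} Take $j$ with $y_j\geq1$ and $3y_j-1-z_j\geq1$, so that
\[
 3(y_j-1)\geq z_j-1-2=z_j-3,\qquad\text{that is,}\qquad y_j-1\geq\frac{z_j-2}{3}-\frac13 .
\]
This is weaker than needed, so I would argue directly through \eqref{eq:det-service-threshold}. By that equation, $\ell_j(z)$ is the least nonnegative integer $\ell$ with $1+z_j\leq3(1+\ell)$. For $\ell=y_j-1\geq0$, the condition reads $1+z_j\leq 3y_j$, which is exactly $3y_j-1-z_j\geq0$. This holds, with slack, because we have $d_j\geq1$. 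So $y_j-1$ is a nonnegative integer meeting the threshold, and minimality gives $y_j-1\geq\ell_j(z)$.

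The only subtle point is the last step: the conclusion follows cleanly from the characterization \eqref{eq:det-service-threshold} rather than from manipulating the ceiling. The extra unit of depth, $\geq1$ rather than $\geq0$, is not needed here; presumably it feeds the potential decrease later.
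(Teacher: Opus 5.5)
Your argument is correct and is essentially the paper's proof: the contradiction form (summing $3y_j<2+z_j$ over the support $S$, then using $|S|\leq k+1$ and $\sum_{j\in S}z_j\leq k$) is how the paper finds the deep donor, and the second claim is also obtained by testing $\ell=y_j-1$ in the minimality characterization \eqref{eq:det-service-threshold}. One harmless slip in the detour you abandoned: $3y_j\geq z_j+2$ gives $3(y_j-1)\geq z_j-1$, not $z_j-3$, so the direct ceiling route also works, because the integer $y_j-1\geq0$ then satisfies $y_j-1\geq(z_j-2)/3$ and hence $y_j-1\geq\ell_j(z)$.
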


\begin{proof}
Let $S=\{i:y_i>0\}$.  If every $j\in S$ had $3y_j-1-z_j<1$, then
\[
 \sum_{j\in S}z_j
   >\sum_{j\in S}(3y_j-2)
   =3(k+1)-2|S|.
\]
Because $|S|\leq k+1$, the right side is at least $k+1$, while
$\sum_{j\in S}z_j\leq\sum_jz_j=k$, a contradiction.  The depth inequality
rearranges to $1+z_j\leq3y_j=3(1+y_j-1)$, so
\eqref{eq:det-service-threshold} implies $\ell_j(z)\leq y_j-1$.
\end{proof}

The receiver cannot be the donor.  Deficiency before the addition implies
$z_i>3y_i+2$, so its new depth is negative.  Lemma~\ref{lem:deep-donor}
therefore makes every transfer feasible.  The receiver deficiency falls by one
and the donor remains nondeficient, so the loop terminates.

Define the top-unit potential
\begin{equation}\label{eq:top-potential}
 \Phi(y,z)=\sum_i(3y_i-1-z_i)_+.
\end{equation}

\begin{lemma}[Repair charge]\label{lem:repair-charge}
If $R_z(y)$ is the repaired state, then
\[
 \lVert R_z(y)-y\rVert_1
 \leq2\bigl(\Phi(y,z)-\Phi(R_z(y),z)\bigr).
\]
\end{lemma}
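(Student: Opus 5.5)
The plan is a per-transfer accounting followed by a telescope. The repair procedure is a finite sequence of transfers $y=y^{(0)}\to y^{(1)}\to\cdots\to y^{(N)}=R_z(y)$. Termination is already established after Lemma~\ref{lem:deep-donor}: each transfer lowers the receiver's deficiency by one and never creates a new deficiency. Each transfer has the form $y^{(n+1)}=y^{(n)}+e_i-e_j$ with $i\neq j$, so $\lVert y^{(n+1)}-y^{(n)}\rVert_1=2$. By the triangle inequality,
\[
 \lVert R_z(y)-y\rVert_1\leq\sum_{n<N}\lVert y^{(n+1)}-y^{(n)}\rVert_1=2N .
\]
It therefore suffices to show $\Phi(y^{(n)},z)-\Phi(y^{(n+1)},z)\geq1$ for every transfer. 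Summing these $N$ inequalities then gives $N\leq\Phi(y,z)-\Phi(R_z(y),z)$, which is the claim.

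For a single transfer, write $y=y^{(n)}$, let $i$ be the receiver, let $w=y+e_i$, and let $j$ be the chosen donor. Only coordinates $i$ and $j$ change, so only their terms in \eqref{eq:top-potential} need checking.

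\textbf{Receiver.} Deficiency $y_i<\ell_i(z)$ means, by minimality of $\ell_i$ in \eqref{eq:det-service-threshold}, that $1+z_i>3(1+y_i)$. Hence $3(y_i+1)-1-z_i<0$. The receiver's depth is therefore negative both before and after the addition, so its term $(\cdot)_+$ is $0$ at both times and contributes no change. This also re-confirms that $j\neq i$.

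\textbf{Donor.} Apply Lemma~\ref{lem:deep-donor} to the mass-$(k+1)$ vector $w$. It produces a positive coordinate of depth at least $1$. Since $j$ maximizes $d_j(w,z)$ over positive coordinates, we get $d:=d_j(w,z)\geq1$. Because $j\neq i$, the donor's depth in $y$ equals $d$ as well. After removing one unit its depth is $d-3$. Its contribution to $\Phi$ thus drops from $d$ to $(d-3)_+$, a decrease of $\min\{d,3\}\geq1$.

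Combining the two cases, each transfer lowers $\Phi$ by at least one, and the telescope above finishes the proof.

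The only delicate point is that the decrease must be measured against $\Phi(y,\cdot)$ rather than $\Phi(w,\cdot)$. This is handled by the observation that the receiver's term is zero throughout, so the donor's depth in $w$ and in $y$ coincide. Everything else is bookkeeping.
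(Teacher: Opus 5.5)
Your proof is correct and follows the same route as the paper. Each transfer has $\ell_1$ length two. The receiver's term stays zero because its depth is negative before and after the addition. The donor's term drops from $d\geq1$ to $(d-3)_+$. The triangle inequality plus summation over transfers then finishes the argument. You also spell out two points the paper leaves implicit: why the maximizing donor has depth at least one (via Lemma~\ref{lem:deep-donor}), and why $j\neq i$.
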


\begin{proof}
Adding a required unit leaves the receiver's potential unchanged because the
top-depth expression is negative before and immediately after the addition.
Removing a donor of depth $d\geq1$ changes its contribution from $d$ to
$(d-3)_+$, a decrease of at least one.  Each transfer changes two coordinates
by one and therefore contributes at most two to $\ell_1$ movement.  The
triangle inequality then bounds the final displacement by the transfer path.
Summing over transfers proves the claim.
\end{proof}

\begin{lemma}[External variation]\label{lem:external-variation}
For a fixed integral $y$ and exact-budget fractional states $z,z'$,
\[
 2\bigl(\Phi(y,z')-\Phi(y,z)\bigr)
 \leq\lVert z'-z\rVert_1.
\]
\end{lemma}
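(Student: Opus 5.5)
The plan is a coordinatewise Lipschitz estimate followed by the exact-budget identity that splits $\ell_1$ distance evenly between increases and decreases. Fix the integral $y$ and write $g_i(\zeta)=(3y_i-1-\zeta)_+$ for $\zeta\geq0$, so that $\Phi(y,z)=\sum_i g_i(z_i)$. Each $g_i$ is nonincreasing and $1$-Lipschitz in its argument, being the positive part of an affine function with slope $-1$.

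\textbf{Coordinatewise bound.} The first step is to show, for every $i$,
\[
 g_i(z'_i)-g_i(z_i)\leq(z_i-z'_i)_+ .
\]
If $z'_i\geq z_i$, monotonicity makes the left side nonpositive. If $z'_i<z_i$, the Lipschitz bound gives at most $z_i-z'_i$. So only coordinates where the fractional mass decreases can raise the potential, and only by the amount of the decrease. Summing over $i$ gives
\[
 \Phi(y,z')-\Phi(y,z)\leq\sum_i(z_i-z'_i)_+ .
\]

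\textbf{Exact-budget identity.} Because $\sum_i z_i=\sum_i z'_i=k$, we have $\sum_i(z_i-z'_i)=0$. Hence the positive and negative parts of $z-z'$ have equal total:
\[
 \sum_i(z_i-z'_i)_+=\sum_i(z'_i-z_i)_+ .
\]
Their sum is $\lVert z'-z\rVert_1$, so each equals $\tfrac12\lVert z'-z\rVert_1$. Multiplying the summed bound by two then gives the claimed inequality.

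There is no real obstacle here. The only point needing care is to use the exact budget rather than merely $\lVert z'-z\rVert_1$ directly, since that is what produces the factor one half (and hence the $2$ on the left). This is precisely where the exact-budget output of the balanced projection, Lemma~\ref{lem:balanced}, is needed downstream.
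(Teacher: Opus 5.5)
Your proposal is correct and is essentially the paper's own proof. Both use the coordinatewise bound $(3y_i-1-z_i')_+-(3y_i-1-z_i)_+\leq(z_i-z_i')_+$, justified by monotonicity and the $1$-Lipschitz property, followed by the equal-budget identity $\sum_i(z_i-z_i')_+=\tfrac12\lVert z'-z\rVert_1$.
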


\begin{proof}
Coordinatewise,
\[
 (3y_i-1-z_i')_+-(3y_i-1-z_i)_+\leq(z_i-z_i')_+.
\]
Equal budgets give
$\sum_i(z_i-z_i')_+=\frac12\lVert z'-z\rVert_1$.  Summing proves the result.
\end{proof}

\begin{theorem}[Deterministic simplex rounding]\label{thm:det-rounding}
For every finite exact-budget fractional path $z_1,\ldots,z_T$ and every
supplied $y_0\in X_{m,k}$, the repair rule produces a deterministic
prefix-causal integral path $y_1,\ldots,y_T$ satisfying
\begin{align}
 1+z_{t,i}&\leq3(1+y_{t,i})&&\text{for every $t,i$},
   \label{eq:det-coordinate}\\
 \sum_{t=1}^T\lVert y_t-y_{t-1}\rVert_1
 &\leq6k+\sum_{t=2}^T\lVert z_t-z_{t-1}\rVert_1.
   \label{eq:det-movement}
\end{align}
Consequently, every nonnegative workload sequence satisfies
\begin{equation}\label{eq:det-service}
 \sum_t f_t(y_t)\leq3\sum_t f_t(z_t).
\end{equation}
\end{theorem}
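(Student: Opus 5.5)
We argue in three steps: first the coordinate threshold \eqref{eq:det-coordinate} from termination of the repair loop, then the service bound \eqref{eq:det-service} as a pointwise consequence, and finally the movement bound \eqref{eq:det-movement} by telescoping the top-unit potential $\Phi$ along the path.

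For \eqref{eq:det-coordinate}, note that each $y_t=R_{z_t}(y_{t-1})$ is well defined, deterministic, and prefix-causal. It is well defined because Lemma~\ref{lem:deep-donor} supplies a feasible donor at every transfer, and we observed above that the receiver is never the donor. The loop terminates because the total deficiency $\sum_i(\ell_i(z_t)-y_i)_+$ drops by one per transfer while the donor stays nondeficient. At termination $y_{t,i}\geq\ell_i(z_t)$ for every $i$, and \eqref{eq:det-service-threshold} then gives \eqref{eq:det-coordinate}. The service bound \eqref{eq:det-service} follows coordinatewise. Indeed,
\[
\frac{r_{t,i}}{1+y_{t,i}}\leq\frac{3r_{t,i}}{1+z_{t,i}},
\]
and we take the maximum over $i$ and sum over $t$. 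Mass is preserved, so $y_t\in X_{m,k}$.

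For the movement bound, write $\Phi_t^-=\Phi(y_{t-1},z_t)$ and $\Phi_t^+=\Phi(y_t,z_t)$. By Lemma~\ref{lem:repair-charge},
\[
\lVert y_t-y_{t-1}\rVert_1\leq2(\Phi_t^--\Phi_t^+).
\]
Summing over $t=1,\dots,T$ and regrouping gives
\[
2\Phi_1^- -2\Phi_T^+ +2\sum_{t=2}^T\bigl(\Phi(y_{t-1},z_t)-\Phi(y_{t-1},z_{t-1})\bigr).
\]
By Lemma~\ref{lem:external-variation} applied with the fixed integral state $y_{t-1}$, each bracketed term is at most $\frac12\lVert z_t-z_{t-1}\rVert_1$. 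Their sum therefore contributes at most $\sum_{t\geq2}\lVert z_t-z_{t-1}\rVert_1$. The term $-2\Phi_T^+$ is nonpositive and can be dropped.

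The remaining task, which I expect to be the only delicate point, is the initial potential. We need $2\Phi(y_0,z_1)\leq6k$. Each term satisfies $(3y_{0,i}-1-z_{1,i})_+\leq3y_{0,i}$, and $\sum_i y_{0,i}=k$. Hence $\Phi(y_0,z_1)\leq3k$, so $2\Phi\leq6k$, which yields exactly the additive $6k$ in \eqref{eq:det-movement}. No bound relating $z_1$ to $y_0$ is required, and that matches the statement, whose fractional sum starts at $t=2$. I would also check the bookkeeping at $t=1$: the repair at round $1$ is charged against $\Phi_1^-$ directly, and it is not charged to any external variation.
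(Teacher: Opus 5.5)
Your proposal is correct and follows the paper's proof step for step. You get termination from the unit drop in total deficiency, then use the threshold \eqref{eq:det-service-threshold} at termination to obtain \eqref{eq:det-coordinate} and \eqref{eq:det-service}, and finally telescope $\Phi$ split at $\Phi(y_{t-1},z_{t-1})$ via Lemmas~\ref{lem:repair-charge} and~\ref{lem:external-variation}, dropping the final potential and using $\Phi(y_0,z_1)\leq 3k$ for the additive $6k$, exactly as the paper does.
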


\begin{proof}
Termination and exact budget follow from the repair rule and
Lemma~\ref{lem:deep-donor}.  At termination
$y_{t,i}\geq\ell_i(z_t)$, which gives \eqref{eq:det-coordinate} and
\eqref{eq:det-service}.

For movement, apply Lemma~\ref{lem:repair-charge} at every round.  For each
$t\geq2$, split at $\Phi(y_{t-1},z_{t-1})$.  The repair decreases telescope,
and Lemma~\ref{lem:external-variation} bounds every external increase.  The
final potential is nonnegative, while
\[
 \Phi(y_0,z_1)\leq\sum_i3y_{0,i}=3k.
\]
Therefore
\[
 \frac12\sum_{t=1}^T\lVert y_t-y_{t-1}\rVert_1
 \leq3k+\frac12\sum_{t=2}^T\lVert z_t-z_{t-1}\rVert_1,
\]
which is \eqref{eq:det-movement}.  Each round uses only $y_{t-1}$ and $z_t$,
so the path is prefix-causal.
\end{proof}

\begin{remark}
Fixed tie-breaking makes the mathematical repair algorithm constructive.  The
Lean development instead uses \texttt{Classical.choose} from a proved nonempty
one-step repair relation.  The choice is state-local and preserves the same
prefix law; the formal theorem makes no runtime claim for the complete
positive-body algorithm.
\end{remark}

\section{Composition}

\begin{proof}[Proof of Theorem~\ref{thm:upper}]
Apply Theorem~\ref{thm:det-rounding} to the deterministic fractional path
constructed above.  Its internal fractional movement is no larger than the
left side of \eqref{eq:movement-bridge}.  Equations
\eqref{eq:service-bridge}, \eqref{eq:movement-bridge}, and
\eqref{eq:det-service} therefore give
\begin{align*}
 \ALG_{\rm det}
 &\leq(M+2k)+6k+3\left(\frac43M+\frac{16}{3}\right)\\
 &=5M+8k+16\\
 &\leq10\CPB\cdot\OPT+(5\CPB+8)k+16,
\end{align*}
where the last step uses \eqref{eq:M}.  Every component is deterministic and
causal, and the additive is independent of the expert count, initial state,
horizon, and workloads.  This proves the $O(1)$ competitive-ratio upper bound
$\CRdet(k)\leq10\CPB$.
\end{proof}

\section{Formal and executable verification}

The formal development uses Lean 4.32.2 and pinned Mathlib 4.32.2.  Its main
declarations are
\begin{itemize}
\item
\texttt{DynamicMoeDeterministicMain.}\newline
\texttt{dynamicMoe\_explicit\_deterministic\_upper\_of\_positive\_body\_source};
and
\item
\texttt{DynamicMoeDeterministicMain.}\newline
\texttt{dynamicMoe\_deterministic\_theta\_one\_of\_positive\_body\_source}.
\end{itemize}
They state the quantifiers in the order
\[
 \forall k\geq1,\ \forall m\geq1,\ \exists A\geq0,\ \forall x_0,
 \ \exists\text{ one deterministic causal algorithm},\ \forall(r_t).
\]
The explicit additive is $(5\CPB+8)k+16$.  Lean reports only the standard axioms
\texttt{propext}, \texttt{Classical.choice}, and \texttt{Quot.sound} for every
audited declaration.

The development formalizes the two-sparse positive-body consequence as its sole
scientific premise and proves all Dynamic MoE implications from it.  The
promoted declarations reuse shared path-cost definitions, but no randomized
rounding statement appears in their dependency footprint.  The pinned source
and theorem-bearing member are checked by SHA-256 before replay.

The companion archive binds the paper, formal sources, exact controls, and
proof identities with a SHA-256 manifest.  From the extracted archive, the
complete replay is:
\begin{verbatim}
python3 verify_all.py
\end{verbatim}
A complete replay builds the promoted Lean declarations, runs 33 focused tests,
regenerates all three canonical JSON outputs byte for byte, and compiles this
paper.  The controls cover the positive-body reduction and deterministic
rounding, including a 200-round oscillation and a false rule without hysteresis
that demonstrates why the dead band is necessary.  The manifest records the
exact case counts, evidence hashes, and promoted proof hashes.

\section{Relation to online rounding and open directions}

The main conclusion is that $\CRdet(k)\leq10\CPB$ for every replica budget
$k\geq1$.  Because $\CPB$ is absolute, this is an $O(1)$ upper bound.  Thus the
deterministic ratio has the same asymptotic order as the randomized ratio from
the previous version, and randomization is not asymptotically necessary.

Vergeres's balanced-descent route is complementary: it gives an explicit
deterministic factor-$16$ fractional algorithm and, after Lazy Threshold
Rounding, a randomized factor-$48$ integral algorithm against an oblivious
adversary~\cite{Vergeres2026}.  Theorem~\ref{thm:det-rounding} instead supplies
the missing deterministic integral conversion.  Composed with the
positive-body path, it removes both random thresholds and the
oblivious-adversary qualification, although we do not claim a polynomial-time
implementation of the complete algorithm.

Correlated sampling on the hypersimplex gives dimension-free randomized
movement couplings.  Naor et al. obtain expected $O(\log k)$
stretch~\cite{NaorEtAl2025}; Anari, Haqi, and Ma give conditional expected
constant stretch under a spectral hypothesis~\cite{AnariEtAl2026}.  Those
results preserve marginals.  Theorem~\ref{thm:det-rounding} instead gives a
deterministic pathwise guarantee for allocations with multiplicity and couples
movement to reciprocal service domination.

The theorem resolves the asymptotic deterministic integral ratio, but not the
best numerical constant.  The concrete integer tangent grid has $O(mk)$
covering rows per event; the geometric grid after Lemma~\ref{lem:tangent}
reduces this to $O(m\log(k+1))$.  The reset schedule uses exact real numbers,
and the formal theorem does not claim a polynomial-time implementation of the
complete algorithm.  Sharper constants and an explicit bit-complexity analysis
remain open.

\paragraph{Research-system disclosure.}
Beyond, the research system operated by Nth Research Collective, materially
assisted with literature retrieval, hypothesis generation, formalization,
executable controls, and adversarial review.  Ian D'Ambrosio selected the
target, directed the investigation, reconciled the evidence, verified the final
claims, and accepts full responsibility for the manuscript.

\end{document}